\documentclass[journal]{IEEEtran}

\usepackage{amsmath,amssymb,amsthm}
\usepackage{mathtools}
\usepackage{booktabs}
\usepackage{cite}

\newtheorem{theorem}{Theorem}
\newtheorem{lemma}[theorem]{Lemma}
\newtheorem{corollary}[theorem]{Corollary}

\theoremstyle{definition}
\newtheorem{definition}[theorem]{Definition}
\newtheorem{example}[theorem]{Example}
\theoremstyle{remark}
\newtheorem{remark}[theorem]{Remark}

\newcommand{\Rd}{\mathbb{R}^d}
\newcommand{\R}{\mathbb{R}}
\newcommand{\E}{\mathbb{E}}
\newcommand{\1}{\mathbf{1}}
\newcommand{\Hent}{H}
\newcommand{\abs}[1]{\lvert #1 \rvert}
\newcommand{\norm}[1]{\lVert #1 \rVert}
\newcommand{\dx}{\,dx}
\DeclareMathOperator*{\esssup}{ess\,sup}

\begin{document}

\title{Convergence of Differential Entropies---{II}}

\author{Mahesh~Godavarti%
\thanks{Manuscript submitted \today.  This paper builds
on~\cite{GodavartiHero2004}.}}

\maketitle

\begin{abstract}
We show that under convergence in measure of probability density
functions, differential entropy converges whenever the entropy
integrands $f_n \abs{\log f_n}$ are uniformly integrable and
tight---a direct consequence of Vitali's convergence theorem.
We give an entropy-weighted Orlicz condition:
$\sup_n \int f_n\, \Psi(\abs{\log f_n}) < \infty$ for a single
superlinear~$\Psi$, strictly weaker than the fixed-$\alpha$
condition of Godavarti and Hero~(2004).  We also disprove the
Godavarti--Hero conjecture that $\alpha > 1$ could be replaced by
$\alpha_n \downarrow 1$.  We recover the sufficient conditions of
Godavarti--Hero, Piera--Parada, and Ghourchian--Gohari--Amini
as corollaries.  On bounded domains, we prove that uniform
integrability of the entropy integrands is both necessary and
sufficient---a complete characterization of entropy convergence.
\end{abstract}

\begin{IEEEkeywords}
Differential entropy, convergence, uniform integrability, Vitali
convergence theorem, convergence in measure.
\end{IEEEkeywords}

\section{Introduction}\label{sec:intro}

If a sequence of probability density functions $\{f_n\}$ converges to
a limiting density~$f$, when does the differential entropy converge?
That is, under what additional conditions does
\begin{equation}\label{eq:entropy-conv}
  f_n \to f
  \quad\Longrightarrow\quad
  \Hent(f_n) \to \Hent(f),
\end{equation}
where $\Hent(f) = -\int_{\Rd} f(x) \log f(x) \dx$?

Godavarti and Hero~\cite{GodavartiHero2004}
studied~\eqref{eq:entropy-conv} under a.e.\ convergence and gave
two sufficient conditions:
(i)~$f_n \leq C$ and $\int \abs{x}^\beta f_n \leq C$;
(ii)~a fixed-$\alpha$ condition
\begin{equation}\label{eq:GH-alpha}
  \sup_n \int_{\Rd} f_n(x)\,
  \bigl(\abs{\log f_n(x)}\bigr)^\alpha \dx < \infty,
  \quad \alpha > 1.
\end{equation}
They asked whether the fixed $\alpha > 1$
in~\eqref{eq:GH-alpha} can be replaced by $\alpha_n \downarrow 1$.
To our knowledge, this question has not been resolved in the
subsequent literature.

Piera and Parada~\cite{PieraParada2009} proved entropy convergence
under a.e.\ convergence with bounded density
ratio~$\esssup\, f_n/f \leq C$.
Ghourchian, Gohari, and Amini~\cite{GhourchianGohariAmini2017}
proved entropy continuity under total variation convergence
($\norm{f_n - f}_1 \to 0$) on classes with bounded density and
moments, and gave an explicit modulus of continuity;
we recover their qualitative convergence but not their rates.
For a broader survey, see~\cite{Verdu2019}.

The analytical tools we use---Vitali's theorem, uniform
integrability, de~la Vall\'{e}e--Poussin---are classical, and
Orlicz conditions for integrability are well studied in
real analysis.
Our contribution is an information-theory facing synthesis:
to our knowledge, no prior work---in the information theory
or probability literature---has formulated uniform integrability
and tightness of the entropy integrands
$\{f_n\abs{\log f_n}\}$ as the single mechanism underlying the
sufficient conditions
of~\cite{GodavartiHero2004, PieraParada2009,
GhourchianGohariAmini2017}, nor
stated a probability-weighted Orlicz condition that targets the
information random variable $\abs{\log f_n(X)}$ directly.

\subsection*{Contributions}

\begin{enumerate}
\item \textbf{Organizing principle}
  (Section~\ref{sec:prelim}):
  We show that entropy converges whenever the entropy integrands
  $g_n = f_n\abs{\log f_n}$ are uniformly integrable and tight
  (Lemma~\ref{thm:NS}).  On bounded domains this is also
  necessary: $\Hent(f_n) \to \Hent(f)$ if and only if
  $\{f_n\abs{\log f_n}\}$ is uniformly integrable
  (Theorem~\ref{prop:compact-iff}).

\item \textbf{Orlicz condition}
  (Section~\ref{sec:orlicz}):
  We give a sufficient condition
  $\sup_n \int f_n\, \Psi(\abs{\log f_n}) < \infty$ for a single
  superlinear~$\Psi$ (Theorem~\ref{thm:orlicz}), strictly weaker
  than~\eqref{eq:GH-alpha} for any fixed $\alpha > 1$.

\item \textbf{Disproof of the Godavarti--Hero (GH) conjecture}
  (Section~\ref{sec:counter}):
  We construct an explicit counterexample (Theorem~\ref{thm:counter}).

\item \textbf{Unification}
  (Section~\ref{sec:corollaries}):
  We recover the conditions of~\cite{GodavartiHero2004},
  \cite{PieraParada2009},
  and~\cite{GhourchianGohariAmini2017} as corollaries.
\end{enumerate}

\section{Preliminaries}\label{sec:prelim}

\subsection{Notation}

We work on $(\Rd, \mathcal{B}(\Rd), \lambda)$ with Lebesgue
measure~$\lambda$.  A \emph{pdf} is a measurable
$f : \Rd \to [0,\infty)$ with $\int f\dx = 1$.  The differential
entropy is $\Hent(f) = -\int_{\Rd} f \log f \dx$ ($0\log 0 = 0$,
since $t\log t \to 0$ as $t \downarrow 0$; $\log = \ln$).
All results assume $\Hent(f)$ is finite: the Vitali
framework requires $f\log f \in L^1$, so
$\abs{\Hent(f)} = \infty$ would need a different approach.
Throughout, $f_n \to f$ denotes convergence in
Lebesgue measure.  Both almost everywhere (a.e.) convergence and total
variation (TV) convergence ($\norm{f_n - f}_1 \to 0$) imply
convergence in measure.  Conversely, for nonneg\-ative
integrable functions, convergence in measure plus convergence
of integrals ($\int g_n \to \int g$) implies
$\norm{g_n - g}_1 \to 0$
(see, e.g., {\cite[p.~97]{Folland1999}}).  For pdfs the
second hypothesis holds automatically ($\int f_n = \int f = 1$),
so convergence in measure implies $\norm{f_n - f}_1 \to 0$
(Lemma~\ref{lem:density-UIT}).
Thus our baseline assumption is
effectively equivalent to TV convergence; we state results
under convergence in measure because it is the natural
hypothesis for Vitali's theorem (Theorem~\ref{thm:Vitali}).
We write UI\,\&\,T for
\emph{uniformly integrable and tight}
(Definition~\ref{def:UI}); on finite measure spaces~(C) is
automatic and UI\,\&\,T reduces to classical UI.

\subsection{Uniform integrability and tightness}

\begin{definition}[UI\,\&\,T]\label{def:UI}
A family $\{g_n\}$ of integrable functions on
$(\Omega, \mathcal{F}, \mu)$ is \emph{uniformly integrable
and tight} (UI\,\&\,T) if:
\begin{enumerate}
\item[(A)] \textbf{Uniform $L^1$ bound:}
  $\sup_{n} \int \abs{g_n} \, d\mu < \infty$.
\item[(B)] \textbf{Equi-integrability:}
  For every $\varepsilon > 0$ there exists $M > 0$ such that
  $\sup_n \int_{\{\abs{g_n} > M\}} \abs{g_n} \, d\mu < \varepsilon$.
\item[(C)] \textbf{Tail concentration:}
  For every $\varepsilon > 0$ there exists $K \subset \Omega$
  with $\mu(K) < \infty$ such that
  $\sup_n \int_{\Omega \setminus K} \abs{g_n} \, d\mu < \varepsilon$.
\end{enumerate}
On a finite measure space, (A)+(B) alone---classical uniform
integrability---suffice for Vitali's theorem.
On $\sigma$-finite spaces such as~$(\Rd,\text{Lebesgue})$,
condition~(C) is the additional ingredient that makes Vitali work
(Theorem~\ref{thm:Vitali}).
When $g_n = f_n$, condition~(C) coincides with probability tightness
of $f_n\dx$; but for $g_n = f_n\abs{\log f_n}$, it is a stronger
requirement on the entropy integrands, not merely on the measures.
On a bounded domain (C) is automatic and
UI\,\&\,T reduces to classical UI.
Example~\ref{ex:converse-fails} shows that on~$\Rd$ condition~(C)
can fail for $\{f_n\abs{\log f_n}\}$ even when
$\Hent(f_n) \to \Hent(f)$.
\end{definition}

\subsection{Classical tools}

\begin{theorem}[de~la Vall\'{e}e--Poussin~\cite{delaValleePoussin1915}]
  \label{thm:dVP}
A family $\{g_n\}$ on a finite measure space is uniformly integrable
if and only if there exists a convex $\Phi : [0,\infty) \to [0,\infty)$
with $\Phi(0) = 0$ and $\Phi(t)/t \to \infty$ such that
$\sup_n \int \Phi(\abs{g_n}) \, d\mu < \infty$.
\end{theorem}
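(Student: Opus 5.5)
The plan is to prove the two directions separately, working on a finite measure space $(\Omega,\mathcal{F},\mu)$ with $\mu(\Omega)<\infty$. Here ``uniformly integrable'' means conditions (A) and (B) of Definition~\ref{def:UI}. In fact, on a finite space (B) implies (A), since $\int\abs{g_n}\le M\mu(\Omega)+\varepsilon$.

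\emph{Sufficiency.} Suppose $\Phi$ is convex, $\Phi(0)=0$, $\Phi(t)/t\to\infty$ and $C:=\sup_n\int\Phi(\abs{g_n})\,d\mu<\infty$. Given $\varepsilon>0$, choose $M$ so that $\Phi(t)/t\ge C/\varepsilon$ for all $t\ge M$. On the set $\{\abs{g_n}>M\}$ we then have $\abs{g_n}\le (\varepsilon/C)\,\Phi(\abs{g_n})$. Hence
\[
\int_{\{\abs{g_n}>M\}}\abs{g_n}\,d\mu\le \frac{\varepsilon}{C}\,C=\varepsilon
\]
uniformly in $n$, which is (B), and (A) follows as noted above. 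Convexity is not even needed in this direction; only superlinearity is used.

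\emph{Necessity.} This is the main obstacle, because $\Phi$ must be built from the family. Let $\eta(M):=\sup_n\int_{\{\abs{g_n}>M\}}\abs{g_n}\,d\mu$, which tends to $0$ as $M\to\infty$ by (B). Choose integers $1\le M_1<M_2<\cdots$ with $\eta(M_k)\le 2^{-k}$. Define the nondecreasing step function
\[
\phi(s):=\#\{k: M_k\le s\},
\]
which tends to $\infty$, and set $\Phi(t):=\int_0^t\phi(s)\,ds$. Then $\Phi(0)=0$ and $\Phi$ is convex, since its derivative $\phi$ is nondecreasing. It is also superlinear, because $\Phi(t)/t\ge \tfrac12\phi(t/2)\to\infty$.

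To bound $\int\Phi(\abs{g_n})$, use $\Phi(t)\le t\,\phi(t)=\sum_k t\,\1\{t\ge M_k\}$. Then, by Tonelli,
\[
\int\Phi(\abs{g_n})\,d\mu\le\sum_k\int_{\{\abs{g_n}\ge M_k\}}\abs{g_n}\,d\mu .
\]
Each term is at most $\eta(M_k-\tfrac12)$, or equivalently we may select the $M_k$ using $\{\abs{g_n}\ge M\}$ from the start, which changes nothing. So each term is $\le 2^{-k}$, and the sum is at most $1$ uniformly in $n$.

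The delicate points to check are the strict versus non-strict level sets, which are handled by choosing the $M_k$ for the closed sets $\{\abs{g_n}\ge M_k\}$ (this is possible since $\{\ge M+1\}\subset\{>M\}$), and the fact that $\Phi$ is finite-valued, which holds since $\phi$ is finite everywhere.
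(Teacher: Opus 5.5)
The paper gives no proof of this statement. It quotes de~la Vall\'{e}e--Poussin as a classical tool and only applies it, in Corollary~\ref{cor:PP} and in the remark after Theorem~\ref{thm:orlicz}. Your argument is the standard textbook proof, and it is correct.

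\begin{itemize}
\item Sufficiency uses only $\Phi\ge 0$ and superlinearity. Finiteness of $\mu(\Omega)$ enters only to obtain (A) from (B).
\item Necessity builds $\Phi(t)=\int_0^t\phi(s)\,ds$ from a nondecreasing step derivative. This $\Phi$ is convex, vanishes at $0$, and is superlinear because $\Phi(t)/t\ge\tfrac12\phi(t/2)$.
\end{itemize}

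Two small cleanups are worth making.

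\begin{itemize}
\item In the sufficiency direction, if $C=0$ the bound $\abs{g_n}\le(\varepsilon/C)\Phi(\abs{g_n})$ divides by zero. Replace $C$ by $\max(C,1)$.
\item In the necessity direction, your first remark (each term is at most $\eta(M_k-\tfrac12)$) does not by itself give $2^{-k}$, because the $M_k$ were chosen to control $\eta(M_k)$. Your re-selection with closed level sets does work. However, the issue disappears entirely if you compute $\Phi$ exactly: $\Phi(t)=\sum_k (t-M_k)^+\le\sum_k t\,\1\{t>M_k\}$. This gives $\int\Phi(\abs{g_n})\,d\mu\le\sum_k\eta(M_k)\le 1$ directly, using the original strict level sets.
\end{itemize}

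Your proof also covers more than the literal statement. The necessity construction depends on the family only through the tail function $\eta$. The sufficiency direction depends only on per-$n$ integrals against a measure of bounded total mass. So the argument goes through verbatim when each $g_n$ is integrated against its own probability measure. That is the form the paper actually invokes when it equates the Orlicz bound~\eqref{eq:Psi-bound} with uniform integrability of $\abs{\iota_n}$ under the respective laws $X_n\sim f_n$, which is not literally a family on a single finite measure space.
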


\begin{theorem}[Vitali; see {\cite[Thm.~4.5.4]{Folland1999}}]
  \label{thm:Vitali}
Let $(\Omega, \mathcal{F}, \mu)$ be $\sigma$-finite,
$g_n, g \in L^1(\mu)$, $g_n \to g$ in measure.  Then
$\int \abs{g_n - g} \, d\mu \to 0$ if and only if $\{g_n\}$ is
UI\,\&\,T.
\end{theorem}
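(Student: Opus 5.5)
The plan is to prove the two directions separately. Both rest on one auxiliary fact, \emph{uniform absolute continuity}: if $\{g_n\}$ satisfies (A) and (B), then for every $\varepsilon>0$ there is $\delta>0$ with $\sup_n\int_E\abs{g_n}\,d\mu<\varepsilon$ whenever $\mu(E)<\delta$. This follows from the split
\begin{equation*}
  \int_E\abs{g_n}\,d\mu \le M\mu(E)+\int_{\{\abs{g_n}>M\}}\abs{g_n}\,d\mu .
\end{equation*}
First choose $M$ by (B) so that the second term is below $\varepsilon/2$, then take $\delta=\varepsilon/(2M)$. A single integrable function $g$ (or a finite family) also satisfies (A), (B) and (C). For (C) we use $\sigma$-finiteness and monotone convergence: take $K=\{\abs{g}>1/m\}$ for $m$ large, which has finite measure by Chebyshev. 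For (B) we use dominated convergence.

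\emph{Sufficiency.} Assume $\{g_n\}$ is UI\,\&\,T and $g_n\to g$ in measure. Fix $\varepsilon>0$. Choose $K$ with $\mu(K)<\infty$ such that both $\sup_n\int_{K^c}\abs{g_n}<\varepsilon$ and $\int_{K^c}\abs{g}<\varepsilon$; this works by taking the union of the sets supplied by (C) for $\{g_n\}$ and for $g$. Then $\int_{K^c}\abs{g_n-g}<2\varepsilon$. On $K$, fix $\eta>0$ and set $E_n=K\cap\{\abs{g_n-g}>\eta\}$. This gives
\begin{equation*}
  \int_K\abs{g_n-g}\,d\mu\le \eta\,\mu(K)+\int_{E_n}\abs{g_n}\,d\mu+\int_{E_n}\abs{g}\,d\mu .
\end{equation*}
Choose $\eta$ with $\eta\mu(K)<\varepsilon$. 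Convergence in measure gives $\mu(E_n)\to0$, so uniform absolute continuity (for $\{g_n\}$ and for $g$) makes the last two terms each less than $\varepsilon$ for all large $n$. Hence $\limsup_n\int\abs{g_n-g}\le 5\varepsilon$.

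\emph{Necessity.} Assume $\norm{g_n-g}_1\to0$. Then (A) is immediate from $\int\abs{g_n}\le\int\abs{g}+\norm{g_n-g}_1$. For (C), pick $N$ with $\norm{g_n-g}_1<\varepsilon/2$ for $n>N$. Take $K$ of finite measure that works, with tolerance $\varepsilon/2$, for $g$ and for each of $g_1,\dots,g_N$; this is possible because finitely many integrable functions each satisfy (C). Then for $n>N$, $\int_{K^c}\abs{g_n}\le\int_{K^c}\abs{g}+\varepsilon/2<\varepsilon$.

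For (B), I would first show uniform absolute continuity directly in the same way: for $n>N$, $\int_E\abs{g_n}\le\int_E\abs{g}+\varepsilon/2$, and the finitely many remaining functions are handled individually. Chebyshev gives $\mu(\{\abs{g_n}>M\})\le \sup_k\norm{g_k}_1/M$. Taking $M$ large makes this measure smaller than $\delta$, so uniform absolute continuity gives $\sup_n\int_{\{\abs{g_n}>M\}}\abs{g_n}<\varepsilon$.

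The only mildly delicate step is the passage between (B) and uniform absolute continuity. Over a possibly infinite-measure space, (B) alone does not control sets of small measure unless the level-$M$ split is used. In the necessity direction, (B) must be recovered from absolute continuity via Chebyshev, which needs the bound (A). Handling the first $N$ indices separately from the tail makes each argument routine.
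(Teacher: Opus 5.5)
Your proof is correct and complete. The paper gives no proof of this statement; it quotes it as a classical result with a textbook citation, so there is no argument of the paper's to compare against.

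What you wrote is the standard proof. For sufficiency, you use (C) to reduce to a set $K$ of finite measure. On $K$ you combine convergence in measure with uniform absolute continuity, which you derive from (B) by the level-$M$ split. For necessity, you treat the finitely many early indices individually, then recover (B) from uniform absolute continuity using Chebyshev and (A).

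Three small observations:
\begin{itemize}
\item $\sigma$-finiteness is never actually used. For (C) of a single integrable $g$, Chebyshev alone shows that $\{\abs{g}>1/m\}$ has finite measure.
\item In the sufficiency direction you only need $\mu(K\cap\{\abs{g_n-g}>\eta\})\to 0$ for sets $K$ of finite measure. So your argument also covers the weaker notion of local convergence in measure.
\item Condition (A) is redundant given (B) and (C), since $\int\abs{g_n}\le \int_{K^c}\abs{g_n} + M\mu(K) + \int_{\{\abs{g_n}>M\}}\abs{g_n}$.
\end{itemize}
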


\subsection{Densities under convergence in measure}

\begin{lemma}\label{lem:density-UIT}
Let $\{f_n\}$ and $f$ be pdfs on~$\Rd$ with $f_n \to f$ in measure.
Then $\norm{f_n - f}_1 \to 0$ and $\{f_n\}$ is UI\,\&\,T.
\end{lemma}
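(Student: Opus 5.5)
The plan is to prove $\norm{f_n - f}_1 \to 0$ first, by a Scheffé-type argument adapted to convergence in measure, and then to read off UI\,\&\,T from the ``only if'' direction of Vitali's theorem (Theorem~\ref{thm:Vitali}) with $g_n = f_n$, $g = f$. Both $f_n$ and $f$ lie in $L^1$ as pdfs, and $(\Rd,\lambda)$ is $\sigma$-finite, so once the $L^1$ convergence is in hand the second claim is immediate.

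For the $L^1$ convergence, write $\abs{f_n - f} = 2(f - f_n)^+ - (f - f_n)$. Integrating and using $\int f_n = \int f = 1$ gives
\begin{equation*}
  \norm{f_n - f}_1 = 2\int (f - f_n)^+ \dx .
\end{equation*}
So it suffices to show $\int (f-f_n)^+ \to 0$. The integrand satisfies $0 \le (f - f_n)^+ \le f$, and $f \in L^1$. Moreover $(f-f_n)^+ \to 0$ in measure, since $\abs{(f-f_n)^+} \le \abs{f - f_n}$.

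The main obstacle is that dominated convergence is usually stated for a.e.\ convergence, while here we only have convergence in measure. I would handle this with the subsequence principle. Given any subsequence, convergence in measure yields a further subsequence along which $(f-f_{n_k})^+ \to 0$ a.e. Dominated convergence with dominating function $f$ then gives $\int (f - f_{n_k})^+ \to 0$ along it. Since every subsequence has a sub-subsequence with limit $0$, the whole sequence converges to $0$.

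Alternatively, the ``if'' direction of Vitali applies directly to $\{(f-f_n)^+\}$. That family is dominated by the single integrable $f$, and a single integrable function is trivially UI\,\&\,T, so conditions (A)--(C) follow by domination. Either route gives $\norm{f_n-f}_1 \to 0$, and then Theorem~\ref{thm:Vitali} gives that $\{f_n\}$ is UI\,\&\,T, completing the proof.
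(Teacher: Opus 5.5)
Your proof is correct. The $L^1$ half is the paper's argument in different notation. Since $(f-f_n)^+ = f - \min(f_n,f)$, your identity $\norm{f_n-f}_1 = 2\int (f-f_n)^+\dx$ is exactly the paper's $\norm{f_n-f}_1 = 2 - 2\int \min(f_n,f)\dx$. Both are closed by the same subsequence-plus-dominated-convergence step. Your alternative, applying the ``if'' direction of Theorem~\ref{thm:Vitali} to the family $\{(f-f_n)^+\}$ dominated by $f$, is also valid.

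The genuine difference is in the UI\,\&\,T half. You get it in one line from the ``only if'' direction of Theorem~\ref{thm:Vitali}. The paper instead verifies (A)--(C) by hand. For (B) it uses $\norm{f_n-f}_1\to 0$, Markov's bound $\lambda(\{f_n>M\})\le 1/M$, and absolute continuity of $A\mapsto\int_A f\dx$. For (C) it uses the tail of $f$. In both it treats the finitely many $n<n_0$ individually. That computation is precisely the proof of Vitali's easy direction specialized to $g_n=f_n$. So your route is shorter and fully legitimate; the paper itself invokes the same converse in the proof of Theorem~\ref{prop:compact-iff}. What the paper's version buys is self-containment: it does not depend on the cited form of Vitali's theorem using exactly the truncation form (B) and tail condition (C) of Definition~\ref{def:UI}.
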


\begin{proof}
Since $\min$ is continuous, $\min(f_n, f) \to f$ in measure,
and $0 \leq \min(f_n, f) \leq f \in L^1$.  We claim
$\int \min(f_n, f) \to \int f$.  If not, some subsequence
$\{n_k\}$ has $\abs{\int \min(f_{n_k}, f) - \int f} \geq \delta
> 0$.  Since $f_{n_k} \to f$ in measure, a further subsequence
converges a.e., and ordinary dominated convergence gives
$\int \min(f_{n_{k_j}}, f) \to \int f$, a contradiction.
Therefore
\begin{align*}
  \int \abs{f_n - f}
  &= \int f_n + \int f - 2\int \min(f_n, f) \\
  &= 2 - 2\int \min(f_n, f) \to 0.
\end{align*}

\medskip\noindent
\textbf{(A):}\; $\sup_n \int f_n = 1$.

\medskip\noindent
\textbf{(B):}\; Fix $\varepsilon > 0$.  Choose $n_0$ with
$\int \abs{f_n - f} < \varepsilon/3$ for $n \geq n_0$, and
$\delta > 0$ with $\lambda(A) < \delta \Rightarrow
\int_A f < \varepsilon/3$.  For $n \geq n_0$ and $M \geq 1/\delta$:
$\lambda(\{f_n > M\}) \leq 1/M \leq \delta$ by Markov, so
$\int_{\{f_n > M\}} f_n
\leq \int \abs{f_n - f} + \int_{\{f_n > M\}} f
< 2\varepsilon/3$.
For $n < n_0$: each $f_n \in L^1$, so choose $M_n$ with
$\int_{\{f_n > M_n\}} f_n < \varepsilon$.  Set
$M = \max(1/\delta, M_1, \ldots, M_{n_0-1})$.

\medskip\noindent
\textbf{(C):}\; Analogous: for $n \geq n_0$,
$\int_{\abs{x}>R} f_n
\leq \int \abs{f_n - f} + \int_{\abs{x}>R} f$.
Choose $R$ for the tail of~$f$, then handle $n < n_0$ individually.
\end{proof}

This extends Scheff\'{e}'s lemma (which assumes a.e.\ convergence)
to convergence in measure via a subsequence argument.

\subsection{Entropy convergence via Vitali's theorem}

\begin{lemma}[Entropy convergence via UI\,\&\,T]\label{thm:NS}
Let $\{f_n\}$ and $f$ be pdfs on $\Rd$ with $f_n \to f$ in measure
and $\Hent(f)$ finite.  If the entropy integrands
$g_n = f_n \abs{\log f_n}$ are UI\,\&\,T, then
$\Hent(f_n) \to \Hent(f)$.
\end{lemma}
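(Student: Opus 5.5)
The plan is to apply Vitali's theorem (Theorem~\ref{thm:Vitali}) to the signed integrands $h_n = f_n \log f_n$ and $h = f\log f$, both with the convention $0\log 0 = 0$. Since $\abs{h_n} = g_n$, the UI\,\&\,T hypothesis on $\{g_n\}$ is exactly UI\,\&\,T of $\{h_n\}$, because conditions (A)--(C) involve only absolute values. Condition (A) gives $\sup_n \int \abs{h_n} < \infty$, so each $h_n \in L^1$ and $\Hent(f_n)$ is finite. The hypothesis that $\Hent(f)$ is finite is read, as stated in the notation section, as $f\log f \in L^1$. Once we know $h_n \to h$ in measure, Vitali yields $\int\abs{h_n - h}\to 0$. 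Then
\[
\abs{\Hent(f_n)-\Hent(f)} \le \int \abs{h_n - h} \to 0 ,
\]
which proves the lemma.

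The main step is to show that $h_n \to h$ in measure. The function $\phi(t) = t\log t$ (with $\phi(0)=0$) is continuous on $[0,\infty)$, but it is not uniformly continuous, so continuity alone does not transfer convergence in measure on $\Rd$ (an infinite measure space). I would use the subsequence characterization instead. By the subsequence principle, it suffices to show that every subsequence of $\{h_n\}$ has a further subsequence converging to $h$ in measure. Given a subsequence, $f_{n_k}\to f$ in measure, so a further subsequence converges a.e.; by continuity of $\phi$, $h_{n_{k_j}} \to h$ a.e. along it.

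Having only a.e.\ convergence along subsequences is a problem on infinite measure, so I would not pass back to convergence in measure at all. Instead I would apply the a.e.\ version of Vitali along each subsequence. For a.e.-convergent sequences on a $\sigma$-finite space, UI\,\&\,T still gives $L^1$ convergence. This follows from Theorem~\ref{thm:Vitali} itself: by (C), there is $K$ of finite measure outside which all $\int\abs{h_n}$ are small. On $K$, a.e.\ convergence implies convergence in measure by Egorov, so on $K$ the theorem applies. The contribution of $\Rd\setminus K$ is bounded by $\sup_n\int_{\Rd\setminus K}\abs{h_n} + \int_{\Rd\setminus K}\abs{h}$. The second term can be made small by enlarging $K$, since $h\in L^1$.

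This gives $\int\abs{h_{n_{k_j}}-h}\to 0$ along a further subsequence of every subsequence. Hence the full sequence satisfies $\int\abs{h_n-h}\to0$, and therefore $\Hent(f_n)\to\Hent(f)$. The expected obstacle is precisely this passage from convergence in measure of $f_n$ to the appropriate convergence of $f_n\log f_n$, since $\phi$ is not uniformly continuous. The subsequence plus restriction-to-$K$ argument is designed to get around it.
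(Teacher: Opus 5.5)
Your proof is correct. Its skeleton matches the paper's: apply Theorem~\ref{thm:Vitali} to $h_n = f_n\log f_n$, using $\abs{h_n} = g_n$ and $h = f\log f \in L^1$, then bound $\abs{\Hent(f_n)-\Hent(f)} \le \int\abs{h_n - h}$.

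You differ from the paper only at the step $h_n \to h$ in measure. The paper asserts it directly from continuity of $t\log t$. You note, correctly in general, that continuity alone does not carry convergence in measure on an infinite measure space. You sidestep this as follows:
\begin{enumerate}
\item pass to a.e.-convergent sub-subsequences;
\item apply Vitali on the finite-measure set $K$ supplied by (C);
\item control $\Rd\setminus K$ using (C) and $h\in L^1$;
\item conclude with the subsequence principle for the real sequence $\int\abs{h_n-h}$.
\end{enumerate}
Each of these steps is valid; in effect you derive the a.e.\ form of Vitali from the in-measure form.

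The paper's one-line claim is nonetheless true here and cheap to justify, because $f$ is a pdf. By Markov, $\lambda(\{f>M\})\le 1/M$. Since $t\log t$ is uniformly continuous on $[0,M+1]$, for each $\varepsilon>0$ there is $\delta\in(0,1)$ with $\{\abs{h_n-h}\ge\varepsilon\}\subset\{f>M\}\cup\{\abs{f_n-f}>\delta\}$. Hence $\limsup_n\lambda(\{\abs{h_n-h}\ge\varepsilon\})\le 1/M$ for every $M$, so $h_n \to h$ in measure. With that observation, Theorem~\ref{thm:Vitali} applies directly on $\Rd$, with no restriction to $K$ and no subsequence bookkeeping.

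So the paper's route is shorter once this Markov-plus-uniform-continuity step is filled in. Yours is longer, but it needs only continuity of $t\log t$ and the Riesz subsequence, never the integrability of $f$, to meet Vitali's hypotheses.
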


\begin{proof}
Set $h_n = f_n \log f_n$, $h = f \log f$.  Since $\Hent(f)$ is
finite, $h \in L^1$.  Since $t \log t$ is continuous on
$[0,\infty)$ (with the convention $0\log 0 = 0$),
$h_n \to h$ in measure.  Since $\{\abs{h_n}\} = \{g_n\}$
is UI\,\&\,T, Theorem~\ref{thm:Vitali} gives
$\int \abs{h_n - h} \to 0$, hence $\Hent(f_n) \to \Hent(f)$.
\end{proof}

\begin{example}[Converse fails on $\Rd$]\label{ex:converse-fails}
Let $f = \1_{[0,1]}$ on~$\R$, so $\Hent(f) = 0$.  For $n \geq 3$,
\[
  f_n(x) = \begin{cases}
    1 - 2/n, & x \in [0,1], \\
    e^n, & x \in [n,\, n + (e^{-n})/n], \\
    e^{-n}, & x \in [2n,\, 2n + (e^n)/n], \\
    0, & \text{otherwise}.
  \end{cases}
\]
Then $\int f_n = 1$ and $f_n \to f$ a.e.  The tall spike's entropy
contribution is $-e^n \cdot n \cdot (e^{-n})/n = -1$, the short
spike's is $-e^{-n} \cdot (-n) \cdot (e^n)/n = +1$; these cancel,
so $\Hent(f_n) \to 0 = \Hent(f)$.
However, each spike contributes exactly~$1$
to $\int f_n \abs{\log f_n}$, and the short spike's support recedes
to infinity, so condition~(C) fails for $\{f_n\abs{\log f_n}\}$.
\end{example}

\begin{theorem}[Bounded domains]\label{prop:compact-iff}
Let $\Omega \subset \Rd$ be bounded, $\{f_n\}$ and $f$ pdfs
on~$\Omega$ with $f_n \to f$ in measure and $\Hent(f)$ finite.  Then
$\Hent(f_n) \to \Hent(f)$ if and only if
$\{f_n \abs{\log f_n}\}$ is uniformly integrable.
\end{theorem}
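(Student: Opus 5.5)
For the forward direction (uniform integrability implies convergence) I would apply Lemma~\ref{thm:NS} directly. Since $\Omega$ is bounded, $\lambda(\Omega)<\infty$. Condition~(C) of Definition~\ref{def:UI} then holds trivially with $K=\Omega$, so uniform integrability of $\{f_n\abs{\log f_n}\}$ is the same as UI\,\&\,T. Lemma~\ref{thm:NS} then gives $\Hent(f_n)\to\Hent(f)$.

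For the converse I would split the integrand $h_n=f_n\log f_n$ into positive and negative parts, $h_n=h_n^+-h_n^-$, and write $h=f\log f=h^+-h^-$.

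\textbf{Step 1 (negative parts).} The negative part lives on $\{f_n<1\}$, where $0\le h_n^-\le 1/e$. Because $\lambda(\Omega)<\infty$, the family $\{h_n^-\}$ is uniformly bounded by an integrable constant, so it is uniformly integrable. The map $t\mapsto (t\log t)^-$ is continuous on $[0,\infty)$. Hence $h_n^-\to h^-$ in measure, and Theorem~\ref{thm:Vitali} gives $\int h_n^-\to\int h^-$.

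\textbf{Step 2 (positive parts).} The hypothesis $\Hent(f_n)\to\Hent(f)$ with $\Hent(f)$ finite says $\int h_n^+-\int h_n^-\to\int h^+-\int h^-$. In particular $h_n\in L^1$ for all large $n$; the finitely many remaining indices do not affect uniform integrability, provided each $f_n\log f_n$ is integrable, which I would assume as part of $\Hent(f_n)$ being defined. Combining with Step~1 gives $\int h_n^+\to\int h^+<\infty$. Moreover $h_n^+\ge 0$, and $h_n^+\to h^+$ in measure by continuity of $t\mapsto(t\log t)^+$. By the fact quoted in the Notation subsection (nonnegative functions, convergence in measure plus convergence of integrals implies $L^1$ convergence, \cite[p.~97]{Folland1999}), we get $\norm{h_n^+-h^+}_1\to0$. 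The "only if" direction of Theorem~\ref{thm:Vitali} then makes $\{h_n^+\}$ UI\,\&\,T, and in particular uniformly integrable.

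\textbf{Step 3 (combine).} Since $f_n\abs{\log f_n}=h_n^++h_n^-$, a sum of two uniformly integrable families, it is uniformly integrable. For condition~(B), use $\{a+b>M\}\subset\{a>M/2\}\cup\{b>M/2\}$ and the bound $(a+b)\1_{\{a+b>M\}}\le 2a\1_{\{a>M/2\}}+2b\1_{\{b>M/2\}}$.

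The main obstacle is Step~2. There the convergence of entropies alone controls only the \emph{difference} of positive and negative parts, which on $\Rd$ can hide mass escaping, as in Example~\ref{ex:converse-fails}. The bounded domain is exactly what pins down the negative part in Step~1, because a constant bound is integrable there. That converts entropy convergence into convergence of $\int h_n^+$, which Scheff\'e-type reasoning upgrades to $L^1$ convergence. I would double-check that the convergence-in-measure version of that upgrade holds, using the same subsequence argument as in Lemma~\ref{lem:density-UIT}.
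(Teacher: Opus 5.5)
Your proposal is correct and follows essentially the same route as the paper. You split $h_n=f_n\log f_n$ into $h_n^{\pm}$, use $0\le h_n^-\le 1/e$ on the bounded domain to get $\int h_n^-\to\int h^-$, deduce $\int h_n^+\to\int h^+$, and upgrade this to $L^1$ convergence by the Scheff\'e-type argument. The only difference is that the paper assembles $\norm{h_n-h}_1\to 0$ and applies the converse of Vitali's theorem once to $h_n$, rather than to $h_n^+$ and then summing the two uniformly integrable families; this is inessential. Your caveat is a fair point that the paper leaves implicit: each $f_n\log f_n$ must be integrable for the finitely many early indices, since uniform integrability fails if some $\Hent(f_n)=-\infty$.
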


\begin{proof}
Tightness is automatic on bounded domains.

$(\Longleftarrow)$\;
Lemma~\ref{thm:NS}.

$(\Longrightarrow)$\;
Let $h_n = f_n \log f_n = h_n^+ - h_n^-$ and $h = f\log f =
h^+ - h^-$.  Since $0 \leq h_n^- \leq 1/e$ and $\Omega$ is
bounded, dominated convergence (via the subsequence argument of
Lemma~\ref{lem:density-UIT}) gives $\int h_n^- \to \int h^-$.
Combined with $\Hent(f_n) \to \Hent(f)$, i.e.,
$\int h_n \to \int h$, we get $\int h_n^+ \to \int h^+$.
Now $h_n^+ \to h^+$ in measure (by continuity of
$(\cdot)^+$ and $t\log t$), and
$\min(h_n^+, h^+) \to h^+$ in measure with
$\min(h_n^+, h^+) \leq h^+ \in L^1$; the same subsequence
argument gives $\int \min(h_n^+, h^+) \to \int h^+$.  Hence
$\int |h_n^+ - h^+| = \int h_n^+ + \int h^+ -
2\int\min(h_n^+, h^+) \to 0$.
Similarly $\int |h_n^- - h^-| \to 0$, so
$\int |h_n - h| \to 0$ and the converse of
Theorem~\ref{thm:Vitali} gives UI.
\end{proof}

\section{The Orlicz Condition}\label{sec:orlicz}

\begin{theorem}[Orlicz-type sufficient condition]\label{thm:orlicz}
Let $\{f_n\}$ and $f$ be pdfs on $\Rd$ with $f_n \to f$ in measure
and $\Hent(f)$ finite.  Suppose there exists a convex, nondecreasing
$\Psi : [0,\infty) \to [0,\infty)$ with $\Psi(0) = 0$ and
$\Psi(t)/t \to \infty$
such that
\begin{equation}\label{eq:Psi-bound}
  \sup_n \int_{\Rd} f_n(x)\, \Psi\!\bigl(\abs{\log f_n(x)}\bigr) \dx
  < \infty.
\end{equation}
Then each $\Hent(f_n)$ is finite and $\Hent(f_n) \to \Hent(f)$.
\end{theorem}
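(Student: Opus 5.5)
The plan is to verify that the entropy integrands $g_n = f_n\abs{\log f_n}$ are UI\,\&\,T in the sense of Definition~\ref{def:UI} and then invoke Lemma~\ref{thm:NS}. The single device used throughout is a level split on the information variable $\abs{\log f_n}$. Let $C = \sup_n \int f_n \Psi(\abs{\log f_n})$. Since $\Psi(t)/t \to \infty$, the quantity
\[
  \eta(T) = \sup_{t > T} \frac{t}{\Psi(t)}
\]
tends to $0$ as $T \to \infty$; in particular $\Psi(t) > 0$ for $t$ large, so $\eta(T)$ is finite for large $T$. For such $T$ and every measurable $A$ we get
\[
  \int_A g_n
  \le T\int_{A \cap \{\abs{\log f_n} \le T\}} f_n
  + \eta(T) \int_A f_n \Psi(\abs{\log f_n})
  \le T\int_{A \cap \{\abs{\log f_n} \le T\}} f_n + C\,\eta(T).
\]
The point is that the Orlicz bound controls the region where $\abs{\log f_n}$ is large, and on the complement $g_n$ is dominated by $T f_n$. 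The family $\{f_n\}$ is itself UI\,\&\,T by Lemma~\ref{lem:density-UIT}.

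\textbf{(A).} Take $A = \Rd$ and fix one large $T$. This gives $\sup_n \int g_n \le T + C\eta(T) < \infty$. In particular each $f_n \log f_n \in L^1$, so each $\Hent(f_n)$ is finite.

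\textbf{(B).} Given $\varepsilon$, first choose $T$ with $C\eta(T) < \varepsilon/2$. On the set $\{g_n > M\} \cap \{\abs{\log f_n} \le T\}$ we have $T f_n \ge g_n > M$, so $f_n > M/T$. Hence the first term is at most $T\int_{\{f_n > M/T\}} f_n$. By condition (B) for $\{f_n\}$, this is below $\varepsilon/2$ uniformly in $n$ once $M$ is large.

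\textbf{(C).} This is the step I expect to be the real obstacle. As noted after Definition~\ref{def:UI} and in Example~\ref{ex:converse-fails}, tightness of the measures $f_n\dx$ does not by itself give tightness of $f_n\abs{\log f_n}$: mass spread thinly far out, where $\abs{\log f_n}$ is huge, can carry entropy. The split handles exactly this situation. Choose $T$ with $C\eta(T) < \varepsilon/2$, so the far-out, low-density region where $\abs{\log f_n} > T$ is controlled by the Orlicz bound. Then use condition (C) for $\{f_n\}$ to pick $K$ of finite measure with $T\sup_n \int_{\Rd\setminus K} f_n < \varepsilon/2$. Applying the display with $A = \Rd \setminus K$ gives $\sup_n \int_{\Rd\setminus K} g_n < \varepsilon$.

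With (A)--(C) established, Lemma~\ref{thm:NS} yields $\Hent(f_n) \to \Hent(f)$. Convexity and monotonicity of $\Psi$ are not actually needed; only superlinearity enters, through $\eta(T) \to 0$.
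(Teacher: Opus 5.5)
Your proof is correct and takes essentially the same approach as the paper. Both split at the level $\abs{\log f_n} = T$, use $\sup_{t\ge T} t/\Psi(t) \to 0$ to control $\{\abs{\log f_n} > T\}$ through the Orlicz bound, bound $g_n \le T f_n$ on the complement and use the UI\,\&\,T of $\{f_n\}$ from Lemma~\ref{lem:density-UIT}, verify (A)--(C), and then invoke Lemma~\ref{thm:NS}. Writing the split as a single inequality for arbitrary measurable $A$ is only a tidy repackaging, and your remark that convexity and monotonicity of $\Psi$ are never used holds for the paper's proof as well.
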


\begin{proof}
The idea is: the Orlicz bound~\eqref{eq:Psi-bound} controls large
values of $\abs{\log f_n}$ (de~la Vall\'{e}e--Poussin style),
giving equi-integrability~(B) for $g_n = f_n\abs{\log f_n}$.
Tightness~(C) of $\{g_n\}$ follows from tightness of $\{f_n\}$
(Lemma~\ref{lem:density-UIT}), using the split
$\{\abs{\log f_n} > T\}$ versus
$\{\abs{\log f_n} \leq T\}$ where $g_n \leq T f_n$.

Since $\Psi(t)/t \to \infty$, there exists $T_0$ with
$\Psi(t) \geq t$ for $t \geq T_0$, so
$\int f_n \abs{\log f_n} \leq T_0 + C < \infty$ where
$C = \sup_n \int f_n \Psi(\abs{\log f_n})$; in particular each
$\Hent(f_n)$ is finite.
We verify (A)--(C) of Lemma~\ref{thm:NS} for
$g_n = f_n \abs{\log f_n}$.
Define $\varphi(T) = \sup_{t \geq T} t / \Psi(t) \to 0$.  On
$\{\abs{\log f_n} > T\}$,
$\abs{\log f_n} \leq \varphi(T)\Psi(\abs{\log f_n})$, giving
\begin{equation}\label{eq:Psi-tail}
  \int_{\{\abs{\log f_n} > T\}} g_n \leq \varphi(T)\, C.
\end{equation}
By Lemma~\ref{lem:density-UIT}, $\{f_n\}$ is UI\,\&\,T.

\medskip\noindent
\textbf{(A):}\; $\int g_n \leq T_0 + C$ (established above).

\medskip\noindent
\textbf{(B):}\; $\{g_n > M\} \subset
\{\abs{\log f_n} > T\} \cup
\{\abs{\log f_n} \leq T,\; g_n > M\}$.
The first part contributes $\leq \varphi(T) C$.  On the second,
$g_n \leq T f_n$ and $g_n > M$ forces $f_n > M/T$, so
$\int_{\{\abs{\log f_n} \leq T,\, g_n > M\}} g_n
\leq T \int_{\{f_n > M/T\}} f_n$.
Since $\{f_n\}$ is UI: first choose $T$ so that
$\varphi(T) C < \varepsilon/2$; then, for this fixed~$T$,
choose $M$ so that
$\sup_n \int_{\{f_n > M/T\}} f_n < \varepsilon/(2T)$.

\medskip\noindent
\textbf{(C):}\;
$\int_{\abs{x}>R} g_n
\leq \varphi(T) C + T \sup_n \int_{\abs{x}>R} f_n$.
First choose $T$ so that $\varphi(T) C < \varepsilon/2$;
then, for this fixed~$T$, choose $R$ so that
$\sup_n \int_{\abs{x}>R} f_n < \varepsilon/(2T)$
(possible by tightness of $\{f_n\}$).

\medskip
By Lemma~\ref{thm:NS}, $\Hent(f_n) \to \Hent(f)$.
\end{proof}

If $X_n \sim f_n$, the information random variable is
$\iota_n = -\log f_n(X_n)$, and $\Hent(f_n) = \E[\iota_n]$.
Condition~\eqref{eq:Psi-bound} says
$\sup_n \E_{f_n}[\Psi(\abs{\iota_n})] < \infty$: the tails of the
information random variable are uniformly controlled by a single
superlinear moment.  By de~la Vall\'{e}e--Poussin, this is
equivalent to uniform integrability of
$\{\abs{\iota_n}\}_{n \geq 1}$ under the respective laws of
$X_n \sim f_n$.

\begin{example}[Strictly weaker than any fixed $\alpha$]
  \label{ex:orlicz-examples}
Common choices satisfying $\Psi(t)/t \to \infty$ that are strictly
weaker than $t^\alpha$ for every $\alpha > 1$:
$\Psi(t) = t \log(1{+}t)$,\; $t \log(e{+}t)$,\;
$t (\log(e{+}t))^p$,\; $t \log(1{+}\log(1{+}t))$.

Let $f \equiv 1$ on $[0,1]$ and
\[
  f_n(x) = \begin{cases}
    e^n, & 0 \leq x \leq \delta_n, \\
    c_n, & \delta_n < x \leq 1,
  \end{cases}
\]
where $\delta_n = 1/(n\, e^n \log(1{+}n))$ and $c_n$ normalizes.
With $\Psi(t) = t\log(1{+}t)$, the spike's $\Psi$-moment is
$e^n \cdot n \cdot \log(1{+}n) \cdot \delta_n = 1$,
so $\sup_n \int f_n\, \Psi(\abs{\log f_n}) < \infty$.
But for any fixed $\alpha > 1$, the spike's $\alpha$-moment is
$n^{\alpha-1}/\log(1{+}n) \to \infty$,
so~\eqref{eq:GH-alpha} fails.
The spike height $e^n$ also prevents the conditions of
Corollaries~\ref{cor:PP} and~\ref{cor:GGA}:
the densities are unbounded, and $f_n/f = e^n$ on $[0,\delta_n]$.
Yet Theorem~\ref{thm:orlicz} gives
$\Hent(f_n) \to \Hent(f)$.

\end{example}

\begin{table}[ht]
\centering
\caption{Common Orlicz test functions.}\label{tab:Psi}
\begin{tabular}{@{}lll@{}}
\toprule
$\Psi(t)$ & Growth rate & Recovers \\
\midrule
$t^\alpha$, $\alpha > 1$ & polynomial & GH Thm~4 \\
$t\log(e+t)$ & barely superlinear & (new) \\
$t\log(1+t)$ & barely superlinear & (new) \\
$t\log(1+\log(1+t))$ & very slow & (new) \\
\bottomrule
\end{tabular}
\end{table}

\section{Unification of Existing Results}\label{sec:corollaries}

We recover the sufficient conditions
of~\cite{GodavartiHero2004}, \cite{PieraParada2009},
and~\cite{GhourchianGohariAmini2017} as corollaries.
Each assumes $\Hent(f)$ finite and $f_n \to f$ in measure
(which holds under a.e.\ or TV convergence).
Each of these conditions implies the Orlicz condition
(Theorem~\ref{thm:orlicz}), so all are genuine corollaries of a
single result.  The fixed-$\alpha$ condition~\eqref{eq:GH-alpha}
is a special case with $\Psi(t) = t^\alpha$;
the bounded density ratio of~\cite{PieraParada2009} implies the
Orlicz condition via de~la Vall\'{e}e--Poussin
(Corollary~\ref{cor:PP}).
Example~\ref{ex:orlicz-examples} shows the Orlicz condition is
strictly weaker than each of these.
Table~\ref{tab:roadmap} gives an overview.

\begin{table}[ht]
\centering
\caption{Conditions for entropy convergence under
$f_n \to f$ in measure with $\Hent(f)$ finite.
On bounded domains, the first row is also necessary
(Theorem~\ref{prop:compact-iff}).}\label{tab:roadmap}
\small
\begin{tabular}{@{}ll@{}}
\toprule
Hypothesis & Reference \\
\midrule
\multicolumn{2}{@{}l}{\emph{General framework}} \\[4pt]
$\{f_n\abs{\log f_n}\}$ UI\,\&\,T & Lem.~\ref{thm:NS} \\[6pt]
$\sup_n \!\int\! f_n \Psi(\abs{\log f_n}) \!<\! \infty$ & Thm.~\ref{thm:orlicz} \\[4pt]
\midrule
\multicolumn{2}{@{}l}{\emph{Recovered results}} \\[4pt]
$\int\! f_n \abs{\log f_n}^\alpha \!\leq\! C$, $\alpha \!>\! 1$ & Cor.~\ref{cor:GH4}; \cite{GodavartiHero2004} \\[6pt]
$f_n \!\leq\! C_1$, $\int \!\abs{x}^\beta f_n \!\leq\! C_2$ & Cor.~\ref{cor:GH1}; \cite{GodavartiHero2004, GhourchianGohariAmini2017} \\[6pt]
$f_n / f \leq C$ a.e. & Cor.~\ref{cor:PP}; \cite{PieraParada2009} \\[4pt]
\midrule
\multicolumn{2}{@{}l}{\emph{Checkable criteria}} \\[4pt]
$\text{supp}(f_n) \!\subset\! \Omega$, $\lambda(\Omega) \!<\! \infty$ & Thm.~\ref{prop:compact-iff} (iff) \\[6pt]
$f_n \!=\! f_{X_n} \!*\! f_Z$,\; $\norm{f_Z}_\infty \!<\! \infty$, & Cor.~\ref{cor:GH1} \\
\quad $\sup_n \!\int \!\abs{x}^\beta f_{X_n} \!<\! \infty$ & \\[6pt]
$\sup_n \!\int\! f_n\, e^{s\abs{\log f_n}} \!<\! \infty$, & Thm.~\ref{thm:orlicz} \\
\quad $s > 0$ (sub-exponential) & \\[6pt]
$P_{f_n}\!\bigl(\abs{\log f_n} \!>\! t\bigr) \!\leq\! a\,e^{-bt^\gamma}$ & Thm.~\ref{thm:orlicz} \\
\quad uniformly in $n$ & \\[6pt]
$\sup_n \!\int\! f_n\abs{\log f_n}\log(1{+}\abs{\log f_n}) \!<\! \infty$ & Thm.~\ref{thm:orlicz} \\
\quad (barely superlinear) & \\[4pt]
\bottomrule
\end{tabular}
\end{table}

\subsection{Recovering Godavarti--Hero}

\begin{corollary}[cf.\ {\cite[Theorem~4]{GodavartiHero2004}}]
  \label{cor:GH4}
If $\sup_n \int f_n (\abs{\log f_n})^\alpha < \infty$ for a fixed
$\alpha > 1$, then $\Hent(f_n) \to \Hent(f)$.
\end{corollary}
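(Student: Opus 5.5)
The plan is to obtain the corollary directly from Theorem~\ref{thm:orlicz} by taking $\Psi(t) = t^\alpha$. The standing hypotheses of this section are already in force: $\{f_n\}$ and $f$ are pdfs, $f_n \to f$ in measure, and $\Hent(f)$ is finite. So the only work is to check that $\Psi(t)=t^\alpha$ meets the structural requirements of Theorem~\ref{thm:orlicz}, and that the assumed bound is exactly~\eqref{eq:Psi-bound}.

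The checks, in order, are as follows.
\begin{itemize}
\item $\Psi(t) = t^\alpha$ maps $[0,\infty)$ into $[0,\infty)$ and satisfies $\Psi(0)=0$.
\item It is nondecreasing, since $\alpha > 0$.
\item It is convex, because $\Psi''(t) = \alpha(\alpha-1)t^{\alpha-2} \ge 0$ for $t>0$ when $\alpha > 1$. A convex-on-$(0,\infty)$ function that is continuous at $0$ is convex on $[0,\infty)$.
\item It is superlinear, since $\Psi(t)/t = t^{\alpha-1} \to \infty$ precisely because $\alpha > 1$.
\end{itemize}
With this choice, $\int f_n \Psi(\abs{\log f_n}) = \int f_n (\abs{\log f_n})^\alpha$, so the hypothesis $\sup_n \int f_n(\abs{\log f_n})^\alpha < \infty$ is literally~\eqref{eq:Psi-bound}.

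Theorem~\ref{thm:orlicz} then yields that each $\Hent(f_n)$ is finite and that $\Hent(f_n) \to \Hent(f)$. One can also note that, concretely, $\varphi(T) = \sup_{t\ge T} t^{1-\alpha} = T^{1-\alpha}$, so the tail bound~\eqref{eq:Psi-tail} becomes $\int_{\{\abs{\log f_n}>T\}} f_n\abs{\log f_n} \le C\,T^{1-\alpha}$. This bound is Markov-type and is essentially the estimate used in~\cite{GodavartiHero2004}.

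There is no real obstacle. The only point needing care is that the strict inequality $\alpha > 1$ is exactly what supplies superlinearity. At $\alpha = 1$ the proof breaks, since then $\varphi(T) \equiv 1$ no longer tends to $0$, and this is consistent with the counterexample of Section~\ref{sec:counter}. Note also that the original result of~\cite{GodavartiHero2004} assumed a.e.\ convergence. Since a.e.\ convergence implies convergence in measure, the corollary recovers that result under the weaker hypothesis used here.
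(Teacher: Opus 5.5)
Your proposal is correct and matches the paper's proof, which is the one-line invocation of Theorem~\ref{thm:orlicz} with $\Psi(t)=t^\alpha$. Your checks of $\Psi(0)=0$, monotonicity, convexity, and superlinearity, together with the explicit value $\varphi(T)=T^{1-\alpha}$, simply spell out what the paper leaves implicit.
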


\begin{proof}
Theorem~\ref{thm:orlicz} with $\Psi(t) = t^\alpha$.
\end{proof}

\begin{corollary}[cf.\ {\cite[Theorem~1]{GodavartiHero2004}}]
  \label{cor:GH1}
If $f_n(x) \leq C_1$ and
$\int \abs{x}^\beta f_n(x) \dx \leq C_2$ for all~$n$
(some $C_1, C_2, \beta > 0$), then $\Hent(f_n) \to \Hent(f)$.
\end{corollary}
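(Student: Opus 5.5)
The plan is to verify the hypotheses of Lemma~\ref{thm:NS} for $g_n = f_n\abs{\log f_n}$ directly, rather than through Theorem~\ref{thm:orlicz}. The reason is that for bounded densities the large values of $\abs{\log f_n}$ come only from the region where $f_n$ is small, and there an Orlicz moment of the form $t\log(1+t)$ can fail. We assume throughout, as stated for all corollaries, that $f_n \to f$ in measure and $\Hent(f)$ is finite. We split $\Rd$ into three regions, $\{f_n \geq 1\}$, $\{e^{-\abs{x}^{\beta/2}} \le f_n < 1\}$ and $\{f_n < e^{-\abs{x}^{\beta/2}}\}$ (the last intersected with $\{\abs{x}\ge 1\}$).

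\textbf{Region 1.} On $\{f_n \ge 1\}$ we have $0 \le g_n \le C_1\log^+ C_1$ uniformly. Hence the integral of $g_n$ over this region is at most $C_1\log^+C_1$, and the set has Lebesgue measure at most $1$.

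\textbf{Region 2.} On $\{e^{-\abs{x}^{\beta/2}} \le f_n < 1\}$ we have $g_n \le f_n\abs{x}^{\beta/2}$. Its integral over $\{\abs{x}>R\}$ is at most $R^{-\beta/2}\int\abs{x}^\beta f_n \le C_2R^{-\beta/2}$. Over $\{\abs{x}\le R\}$, $g_n \le 1/e$ pointwise.

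\textbf{Region 3.} On $\{f_n < e^{-\abs{x}^{\beta/2}}\}$, since $t\abs{\log t}$ is increasing on $(0,1/e)$, we get $g_n \le e^{-\abs{x}^{\beta/2}}\abs{x}^{\beta/2}$ once $\abs{x}$ is large enough that $e^{-\abs{x}^{\beta/2}} < 1/e$. On the bounded remainder $g_n \le 1/e$. The majorant $e^{-\abs{x}^{\beta/2}}\abs{x}^{\beta/2}$ is integrable on $\Rd$ and independent of $n$.

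\textbf{Checking (A)--(C).}
\begin{itemize}
\item[(A)] Adding the three region bounds gives $\sup_n\int g_n<\infty$.
\item[(B)] Take $M > \max(C_1\log^+C_1,\,1/e)$. Then $g_n>M$ is impossible in Regions 1 and 3 and on the bounded parts. So $\int_{\{g_n>M\}}g_n$ is at most the Region~2 tail bound $C_2R^{-\beta/2}$, provided $M \ge \sup_{\abs{x}\le R}\abs{x}^{\beta/2}/e$ so that the inner part of Region~2 is excluded. Choose $R$ first to make $C_2R^{-\beta/2}$ small, then $M$.
\item[(C)] Take $K=\{\abs{x}\le R\}$. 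Region~1 outside $K$ contributes at most $C_1\log^+C_1\cdot\int_{\abs{x}>R} f_n \le C_1\log^+ C_1\,C_2R^{-\beta}$ by Markov. Region~2 contributes $\le C_2R^{-\beta/2}$. Region~3 contributes the tail of a fixed integrable majorant. All three vanish as $R\to\infty$.
\end{itemize}
Lemma~\ref{thm:NS} then gives $\Hent(f_n)\to\Hent(f)$.

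The main obstacle is the low-density region. It must be handled with an $x$-dependent threshold: a threshold like $e^{-\abs{x}^{\beta/2}}$ makes the moment bound control the moderate part and leaves a summable majorant for the rest. A fixed threshold such as $f_n<e^{-T}$ cannot work, because that set has infinite measure and no uniform bound is available there.
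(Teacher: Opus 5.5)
Your proof is correct. It uses the same decomposition as the paper: split at $f_n = 1$ and at the $x$-dependent threshold $f_n = e^{-\abs{x}^{\gamma}}$ with $\gamma = \beta/2$. The difference is what you feed it into. You verify (A)--(C) for $g_n = f_n\abs{\log f_n}$ and apply Lemma~\ref{thm:NS} directly. The paper uses the split only to show $\sup_n \int f_n (\abs{\log f_n})^2 < \infty$, and then invokes Theorem~\ref{thm:orlicz} with $\Psi(t) = t^2$.

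Your stated reason for avoiding the Orlicz route is mistaken. Under the moment hypothesis, the same split gives:
\begin{itemize}
\item $f_n(\log f_n)^2 \le f_n \abs{x}^{\beta}$ on your Region~2;
\item on Region~3, since $u^2e^{-u}$ is decreasing for $u \ge 2$, a fixed integrable majorant $\abs{x}^{\beta}e^{-\abs{x}^{\beta/2}}$ where $\abs{x}^{\beta/2}\ge 2$, and the bound $4e^{-2}$ on the bounded remainder.
\end{itemize}
So even a quadratic Orlicz moment is uniformly bounded. In fact any fixed power $\alpha$ works, taking $\gamma \le \beta/\alpha$. Your own Region~2/3 bounds are essentially this computation with exponent $1$ in place of $2$.

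What each route buys:
\begin{itemize}
\item The paper's route is brief, and it supports the unification claim that every corollary passes through Theorem~\ref{thm:orlicz}.
\item Your route is self-contained. It gets tightness of $\{f_n\}$ from Markov and the moment bound rather than from Lemma~\ref{lem:density-UIT}. It also makes visible that (C) is the only substantive condition.
\end{itemize}

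Your (B) is more involved than necessary. On Region~2 you also have $f_n < 1$, hence $g_n \le 1/e$ there. So $g_n \le \max(1/e,\, C_1\log^+ C_1)$ everywhere, and $\{g_n > M\}$ is empty once $M$ exceeds this bound. The extra requirement $M \ge R^{\beta/2}/e$ and the appeal to the Region~2 tail are harmless but unneeded.
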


\begin{proof}
We show $\sup_n \int f_n\, (\abs{\log f_n})^2 < \infty$ and apply
Theorem~\ref{thm:orlicz} with $\Psi(t) = t^2$.
Since $f_n \leq C_1$, the region $\{f_n \geq 1\}$ contributes
$\leq (\log C_1)^2$.
On $\{f_n < 1\}$, split at $f_n = e^{-\abs{x}^\gamma}$ for
$0 < \gamma \leq \beta/2$:
on $\{f_n \geq e^{-\abs{x}^\gamma}\}$,
$f_n (\abs{\log f_n})^2 \leq f_n \abs{x}^{2\gamma}$, bounded
uniformly by the moment condition;
on $\{f_n < e^{-\abs{x}^\gamma}\}$, with $u = -\log f_n$,
$f_n (\abs{\log f_n})^2 = u^2 e^{-u} \leq C_a e^{-a\abs{x}^\gamma}$,
integrable uniformly in~$n$.
\end{proof}

\subsection{Recovering Piera--Parada}

\begin{corollary}[cf.\ {\cite[Theorem~IV.1]{PieraParada2009}}]
  \label{cor:PP}
If $f > 0$ a.e.\ and $\esssup\, f_n/f \leq C$ for all~$n$, then
$\Hent(f_n) \to \Hent(f)$.
\end{corollary}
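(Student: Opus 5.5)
The plan is to show that the entropy integrands $g_n = f_n\abs{\log f_n}$ are dominated by a single integrable function. A dominated family is automatically UI\,\&\,T, so Lemma~\ref{thm:NS} will then give the conclusion directly. I do not plan to route the argument through Theorem~\ref{thm:orlicz}, although that is also possible.

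\textbf{Step 1 (factorization).} Since $f>0$ a.e.\ and $\esssup f_n/f \le C$, I will define $r_n = f_n/f$ a.e. Then $0 \le r_n \le C$ a.e., and
\[
  f_n \log f_n = r_n\, f\log f + f\, r_n \log r_n ,
\]
using the convention $0\log 0 = 0$ when $r_n = 0$. The set where $f = 0$ is null, and there $f_n = 0$ a.e.\ as well, so it can be ignored.

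\textbf{Step 2 (domination).} Set $K = \sup_{0\le r\le C}\abs{r\log r} = \max(1/e,\, C\log^+ C)$, which is finite. Then a.e.
\[
  g_n \le C\, f\abs{\log f} + K f =: G .
\]
Here $G \in L^1$, because $\Hent(f)$ is finite (so $f\log f \in L^1$) and $\int f = 1$.

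\textbf{Step 3 (UI\,\&\,T from domination).} Condition (A) holds since $\int g_n \le \int G$. For (B), note that $\{g_n > M\} \subset \{G > M\}$, so $\sup_n \int_{\{g_n>M\}} g_n \le \int_{\{G>M\}} G \to 0$ as $M\to\infty$ by dominated convergence. For (C), take $K_R = \{\abs{x}\le R\}$; then $\sup_n \int_{\abs{x}>R} g_n \le \int_{\abs{x}>R} G \to 0$. Lemma~\ref{thm:NS} then yields $\Hent(f_n)\to\Hent(f)$, and each $\Hent(f_n)$ is finite because $\int g_n \le \int G$.

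I do not expect any serious obstacle. The only delicate point is the a.e.\ bookkeeping in Step~1: one must check that the bound $f_n \le Cf$ forces $f_n = 0$ a.e.\ off $\{f>0\}$, so that the pointwise domination holds a.e.\ and not merely on $\{f>0\}$. If one instead wants to match the paper's claim that this condition implies the Orlicz condition, the extra work is to apply Theorem~\ref{thm:dVP} to the single integrable function $f\abs{\log f}$ under the measure $f\dx$. This produces a superlinear $\Psi$ with $\int f\,\Psi(\abs{\log f}) < \infty$. Transferring this to $f_n$ requires bounding $\Psi(\abs{\log f_n})$ through $\abs{\log f_n} \le \abs{\log f} + \abs{\log r_n}$, which is more awkward near $r_n = 0$. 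For that reason I would present the domination argument.
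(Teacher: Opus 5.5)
Your proof is correct, and it takes a genuinely different and more elementary route than the paper. You use the pointwise identity $f_n\log f_n = r_n f\log f + f\,r_n\log r_n$ to get a single $n$-independent majorant $G = Cf\abs{\log f} + Kf \in L^1$. Conditions (A)--(C) then follow at once, tightness included, with no appeal to Lemma~\ref{lem:density-UIT}, and Lemma~\ref{thm:NS} finishes.

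The paper instead verifies the Orlicz hypothesis of Theorem~\ref{thm:orlicz}. It applies Theorem~\ref{thm:dVP} on $(\Rd, f\dx)$ to get a convex superlinear $\Phi$ with $\int f\,\Phi(\abs{\log f})<\infty$. It then caps $\Phi'$ by $e^{t/4}$ so that $\Phi(t)\le 4e^{t/4}$, and sets $\Psi(t)=\Phi(t/2)$. Convexity gives the split $f_n\Psi(\abs{\log f_n}) \le \tfrac{r_n f}{2}\bigl[\Phi(\abs{\log r_n}) + \Phi(\abs{\log f})\bigr]$. The cap is exactly what keeps $r_n\Phi(\abs{\log r_n})$ bounded near $r_n=0$, which is the difficulty you anticipated.

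Your approach is shorter and avoids both the capping trick and the two-stage machinery of Theorem~\ref{thm:orlicz}. It also shows plainly that the Piera--Parada hypothesis is a domination condition, so a subsequence-plus-dominated-convergence argument would already suffice, without Vitali. What the paper's route buys is the structural claim of Section~\ref{sec:corollaries}: every recovered condition implies the single Orlicz bound~\eqref{eq:Psi-bound}. Your argument proves the corollary and places it under the UI\,\&\,T framework of Lemma~\ref{thm:NS}, but not under Theorem~\ref{thm:orlicz}.
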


\begin{proof}
Since $\Hent(f)$ is finite, $\abs{\log f} \in L^1(f\dx)$.
By Theorem~\ref{thm:dVP} on the probability space $(\Rd, f\dx)$,
there exists a convex superlinear~$\Phi$ with
$\int f\, \Phi(\abs{\log f}) < \infty$;
by capping the derivative ($\Phi' \leftarrow \min(\Phi', e^{t/4})$)
we may assume $\Phi(t) \leq 4e^{t/4}$ while preserving convexity
and superlinearity.
Set $\Psi(t) = \Phi(t/2)$.
With $r_n = f_n/f \in [0, C]$, convexity of~$\Phi$ gives
\begin{align*}
  f_n \Psi(\abs{\log f_n})
  &= r_n f\, \Phi\!\bigl(\tfrac{\abs{\log r_n + \log f}}{2}\bigr) \\
  &\leq \tfrac{r_n f}{2}
    \bigl[\Phi(\abs{\log r_n}) + \Phi(\abs{\log f})\bigr].
\end{align*}
Since $\Phi(t) \leq 4 e^{t/4}$,
$r_n \Phi(\abs{\log r_n}) \leq K$ for a constant~$K$
(for $r_n < 1$: $e^{-s} \cdot 4 e^{s/4} = 4 e^{-3s/4} \leq 4$;
for $r_n \in [1,C]$: $\leq C\Phi(\log C)$).
Hence
$\sup_n \int f_n \Psi(\abs{\log f_n})
\leq \tfrac{K}{2} + \tfrac{C}{2} \int f\, \Phi(\abs{\log f})
< \infty$,
and Theorem~\ref{thm:orlicz} applies.
\end{proof}

\subsection{Recovering Ghourchian--Gohari--Amini}

\begin{corollary}[cf.\ {\cite[Theorem~1]{GhourchianGohariAmini2017}}]
  \label{cor:GGA}
If $f_n(x) \leq m$, $\int \abs{x}^\alpha f_n(x) \dx \leq v$ for
all~$n$, and $\norm{f_n - f}_1 \to 0$, then
$\Hent(f_n) \to \Hent(f)$.
\end{corollary}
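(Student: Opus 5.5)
This is essentially Corollary~\ref{cor:GH1} in a different guise, so the plan is to reduce to it. TV convergence $\norm{f_n - f}_1 \to 0$ implies convergence in measure (Chebyshev: $\lambda(\{\abs{f_n - f} > \eta\}) \le \eta^{-1}\norm{f_n-f}_1$), which is the standing hypothesis of Theorem~\ref{thm:orlicz}. The hypotheses $f_n \le m$ and $\int \abs{x}^\alpha f_n \le v$ are exactly those of Corollary~\ref{cor:GH1} with $C_1 = m$, $C_2 = v$, $\beta = \alpha$. We also need $\Hent(f)$ finite, and I would verify this rather than merely assume it.

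\textbf{Finiteness of $\Hent(f)$.} Passing to an a.e.-convergent subsequence, we get $f \le m$ a.e. Fatou's lemma then gives $\int \abs{x}^\alpha f \le v$. Now apply the same pointwise split used in the proof of Corollary~\ref{cor:GH1} to $f$ itself.
\begin{itemize}
\item On $\{f \ge 1\}$ we have $\abs{f\log f} \le m\log m$, and this set has measure at most $1$.
\item On $\{e^{-\abs{x}^\gamma} \le f < 1\}$ we have $f\abs{\log f} \le f\abs{x}^\gamma$, which is integrable for $\gamma \le \alpha$.
\item On $\{f < e^{-\abs{x}^\gamma}\}$ we have $f\abs{\log f} \le C e^{-\abs{x}^\gamma/2}$, which is integrable.
\end{itemize}
Hence $f\log f \in L^1$.

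\textbf{Main step.} Invoke Corollary~\ref{cor:GH1}, that is, show $\sup_n \int f_n (\log f_n)^2 < \infty$ and apply Theorem~\ref{thm:orlicz} with $\Psi(t) = t^2$. The bound on the three regions goes as follows.
\begin{itemize}
\item On $\{f_n \ge 1\}$, the contribution is at most $(\log m)^2$.
\item On $\{e^{-\abs{x}^\gamma} \le f_n < 1\}$ with $\gamma = \alpha/2$, the contribution is at most $\int f_n \abs{x}^{\alpha}$. This is at most $v$.
\item On $\{f_n < e^{-\abs{x}^\gamma}\}$, the maximum of $u^2e^{-u}$ over $u \ge \abs{x}^\gamma$ is bounded by $C_a e^{-a\abs{x}^\gamma}$ for any $a<1$, which is integrable on $\Rd$ independently of $n$.
\end{itemize}

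\textbf{Main obstacle.} The only delicate point is the last region, where the bound must be independent of $n$. The trick is that it is purely pointwise in $x$, so it needs no property of $f_n$ beyond $f_n < e^{-\abs{x}^\gamma}$. A minor subtlety is that near $x=0$ the condition $f_n < e^{-\abs{x}^\gamma}$ gives little. There, $u^2e^{-u}\le 4e^{-2}$ is bounded, and the unit ball has finite measure, so this is harmless.

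Having the uniform $\Psi$-moment with $\Psi(t)=t^2$ (convex, nondecreasing, superlinear), Theorem~\ref{thm:orlicz} yields $\Hent(f_n)\to\Hent(f)$. This recovers the qualitative part of \cite{GhourchianGohariAmini2017}, though not its modulus of continuity.
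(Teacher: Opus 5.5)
Your proposal is correct and follows the paper's route: TV convergence gives convergence in measure, and the hypotheses are exactly those of Corollary~\ref{cor:GH1}, which applies Theorem~\ref{thm:orlicz} with $\Psi(t)=t^2$ via the same three-region split. Your extra verification that $\Hent(f)$ is finite (using $f\le m$ a.e.\ and Fatou on the moment bound) is a small bonus; the paper does not prove it but takes finiteness of $\Hent(f)$ as a standing assumption.
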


\begin{proof}
TV convergence implies convergence in measure.
Corollary~\ref{cor:GH1} with $C_1 = m$, $C_2 = v$,
$\beta = \alpha$.
\end{proof}

\begin{remark}[Applications]\label{rem:applications}
The framework developed here applies in any setting where
density convergence must be lifted to entropy convergence---including
plug-in entropy estimation, mutual information continuity along
sequences of channel inputs, and entropy convergence in CLT-type
approximations.  On bounded domains (e.g., bounded-amplitude
signals), Theorem~\ref{prop:compact-iff} gives a complete answer.
On~$\Rd$, Table~\ref{tab:roadmap} lists practical sufficient
conditions that can be verified directly from model parameters.
\end{remark}

\section{The GH Conjecture is False}\label{sec:counter}

Godavarti and Hero~\cite{GodavartiHero2004} asked whether
the fixed $\alpha > 1$ in~\eqref{eq:GH-alpha} can be replaced
by a sequence $\alpha_n \downarrow 1$:

\medskip\noindent
\textbf{GH Conjecture.}\;
\emph{If\/ $f_n \to f$ a.e.\ and\/
$\sup_n \int f_n\, (\abs{\log f_n})^{\alpha_n} < \infty$
for some $\alpha_n \downarrow 1$, then
$\Hent(f_n) \to \Hent(f)$.}

\medskip
We disprove this with an explicit counterexample.

\begin{theorem}[Counterexample]\label{thm:counter}
There exist pdfs $\{f_n\}$ and $f$ on $[0,1]$ such that:
\begin{enumerate}
\item[\textup{(i)}] $f_n \to f$ a.e.\ (hence in measure);
\item[\textup{(ii)}] $\sup_n \int_0^1 f_n
  (\abs{\log f_n})^{\alpha_n} \dx < \infty$
  for $\alpha_n \downarrow 1$;
\item[\textup{(iii)}] $\Hent(f_n) \not\to \Hent(f)$.
\end{enumerate}
\end{theorem}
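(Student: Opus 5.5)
The plan is to adapt the single-spike construction of Example~\ref{ex:orlicz-examples}, tuning the spike so that it carries a fixed, nonvanishing amount of entropy while its probability mass tends to zero. Let $f \equiv \1_{[0,1]}$, so $\Hent(f) = 0$. For $n \geq 3$ set $w_n = e^{-n}/n$ and
\[
  f_n(x) = \begin{cases}
    e^{n}, & 0 \leq x \leq w_n, \\
    c_n, & w_n < x \leq 1,
  \end{cases}
  \qquad c_n = \frac{1 - 1/n}{1 - w_n}.
\]
The spike has mass $e^n w_n = 1/n$, so $\int_0^1 f_n = 1$. Its contribution to $-\int f_n \log f_n$ is $-e^n \cdot n \cdot w_n = -1$. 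This is the mechanism behind (iii): the spike's mass disappears but its entropy contribution does not. Equivalently, $\{f_n \abs{\log f_n}\}$ is not uniformly integrable, so by Theorem~\ref{prop:compact-iff} entropy convergence must fail.

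\textbf{Verification of (i) and (iii).} For (i), fix $x \in (0,1]$. Since $w_n \to 0$, we have $x > w_n$ for all large $n$, and then $f_n(x) = c_n \to 1$. Hence $f_n \to f$ everywhere except at $x = 0$. For (iii), compute directly:
\[
  \Hent(f_n) = -1 - (1 - w_n)\, c_n \log c_n .
\]
Since $c_n \to 1$, the second term tends to $0$, so $\Hent(f_n) \to -1 \neq 0 = \Hent(f)$.

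\textbf{Verification of (ii).} This is the step where the exponents must be chosen, and I expect it to be the only delicate one. The spike contributes $e^n w_n \, n^{\alpha_n} = n^{\alpha_n - 1}$. For this to stay bounded we need $(\alpha_n - 1)\log n = O(1)$. The natural choice is
\[
  \alpha_n = 1 + \frac{1}{\log n},
\]
which is strictly decreasing to $1$ and makes the spike term exactly $e$. The body contributes $(1 - w_n)\, c_n \abs{\log c_n}^{\alpha_n}$. Here $\abs{\log c_n} \to 0$ and $\alpha_n \in (1,2]$ for $n \geq 3$, so $\abs{\log c_n}^{\alpha_n} \leq 1$ once $\abs{\log c_n} \leq 1$, and this term is bounded (indeed it tends to $0$). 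Summing the two pieces, the supremum in (ii) is at most $e + 1$ for all $n \geq 3$, up to finitely many initial terms, which are handled trivially.

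The one point to note is that the construction forces $\alpha_n - 1 \asymp 1/\log n$. Any $\alpha_n$ decaying this slowly or faster works equally well. The final write-up would remark that the conjecture fails because a vanishing exponent excess no longer provides a single superlinear $\Psi$ as required in Theorem~\ref{thm:orlicz}.
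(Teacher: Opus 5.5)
Your construction and verification are correct and essentially identical to the paper's proof: the same spike of height $e^n$ on $[0, e^{-n}/n]$, the same $c_n$, and the same $\alpha_n = 1 + 1/\log n$, which fixes the spike's $\alpha_n$-moment at $e$ while its entropy contribution stays at $-1$. You are somewhat more explicit than the paper about the background term, via the exact formula $\Hent(f_n) = -1 - (1-w_n)c_n\log c_n$ and the bound $\abs{\log c_n}^{\alpha_n} \le 1$.
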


\begin{proof}
Let $f \equiv 1$ on $[0,1]$ ($\Hent(f) = 0$).  Set
$\delta_n = (e^{-n})/n$,
$\alpha_n = 1 + 1/\log n$, and
\begin{equation}\label{eq:fn-def}
  f_n(x) =
  \begin{cases}
    e^n, & 0 \leq x \leq \delta_n, \\
    c_n, & \delta_n < x \leq 1,
  \end{cases}
\end{equation}
where $c_n = (1 - 1/n)/(1 - (e^{-n})/n)$.
For $n \geq 2$, $(e^{-n})/n < 1$ so $c_n > 0$, and $c_n \to 1$.

\medskip\noindent
\textbf{(i):} For fixed $x \in (0,1]$, $\delta_n \to 0$ so
$f_n(x) = c_n \to 1$ eventually.
Moreover, $\norm{f_n - f}_1 \leq (e^n - 1)\delta_n +
\abs{c_n - 1}(1 - \delta_n) \to 0$.

\medskip\noindent
\textbf{(ii):} The spike's $\alpha_n$-moment is
$e^n \cdot n^{\alpha_n} \cdot (e^{-n})/n = n^{\alpha_n - 1}
= n^{1/\log n} = e$.
The background contribution $\to 0$.  So
$\sup_n \int f_n (\abs{\log f_n})^{\alpha_n} \leq e + 1 < \infty$.

\medskip\noindent
\textbf{(iii):} The spike's entropy contribution is
$-e^n \cdot n \cdot (e^{-n})/n = -1$.
The background $\to 0$.  So $\Hent(f_n) \to -1 \neq 0$.
\end{proof}

The failure mechanism is clear: the moving-exponent condition uses
$\Psi_n(t) = t^{\alpha_n}$ with $\alpha_n \downarrow 1$, but no
single $\Psi$ with $\Psi(t)/t \to \infty$ is dominated by
all~$\Psi_n$, since $\Psi(t)/t \leq t^{\alpha_n - 1} \to 1$ for
any fixed~$t$ as $n \to \infty$.
Since the counterexample lives on~$[0,1]$ where tightness is
automatic, Theorem~\ref{prop:compact-iff} further shows that the
moving-exponent condition does not even imply uniform integrability
of $\{f_n \abs{\log f_n}\}$.
For any fixed $\alpha > 1$, the spike's $\alpha$-moment is
$n^{\alpha - 1} \to \infty$, so~\eqref{eq:GH-alpha} fails
and Theorem~\ref{thm:orlicz} does not apply.

\section{Conclusion}\label{sec:conclusion}

We have shown that Vitali's convergence theorem reduces entropy
convergence to UI\,\&\,T of the entropy integrands
$f_n \abs{\log f_n}$ (Lemma~\ref{thm:NS}).  On bounded domains,
this condition is both necessary and sufficient
(Theorem~\ref{prop:compact-iff}), giving a complete
characterization of entropy convergence.  We gave an
Orlicz condition (Theorem~\ref{thm:orlicz}) that is strictly weaker
than the sufficient conditions
of~\cite{GodavartiHero2004, PieraParada2009,
GhourchianGohariAmini2017}, and disproved the Godavarti--Hero
conjecture (Theorem~\ref{thm:counter}).  All results assume
convergence in Lebesgue measure (equivalently, TV convergence for
pdfs); they do not address entropy continuity under weak
convergence of distributions.
On~$\Rd$, UI\,\&\,T of $\{f_n\abs{\log f_n}\}$ is sufficient but
not necessary (Example~\ref{ex:converse-fails}); characterizing
necessity without tightness remains open.

\bibliographystyle{IEEEtran}
\bibliography{references}

\end{document}